\newcommand{\Tr}[1]{\text{Tr}\left\{#1\right\}}
\newcommand{\bra}[1]{\langle#1\vert}
\newcommand{\ket}[1]{\vert#1\rangle}
\newcommand\braket[2]{\langle#1|#2\rangle}
\newtheorem{theorem}{Theorem}%%[section]
\newtheorem{lemma}{Lemma}
\begin{document}

\title{Quantum advantage from negativity of a work quasiprobability distribution}

\author{Gianluca~Francica}
\address{Dipartimento di Fisica e Astronomia, Universit\`{a} di Padova, via Marzolo 8, 35131 Padova, Italy}

\date{\today}

\begin{abstract}
Quantum batteries can be charged by performing a work ``instantaneously'' in the limit of a large number of cells, achieving a so-called quantum advantage. In general, the work exhibits statistics that can be represented by a quasiprobability in the presence of initial quantum coherence in the energy basis. Here we show that these two concepts of quantum thermodynamics, which apparently appear disconnected, can show a simple relation. Specifically, if a certain work distribution shows negativity asymptotically in the limit of a large number of cells and in a certain time interval, then we surely get a quantum advantage in the charging process. In particular, we prove this for a direct charging protocol performed with a class of charging Hamiltonian operators.
\end{abstract}

\maketitle

\section{Introduction}
Quantum batteries~\cite{Campaioli19,Bhattacharjee21,Campaioli24} are thought of as quantum systems capable of storing energy. This energy can be used to perform useful work, e.g., through a cyclic change in some parameters of the quantum system~\cite{Alicki13}. The maximum extractable work is called ergotropy~\cite{Allahverdyan04} and can be related to some quantum features of the initial state, e.g., quantum correlations~\cite{Francica17,Bernards19,Touil22,Cruz22,Francica22} and quantum coherence~\cite{Francica20,Cakmak20,Niu24}. This amount of work can also be extracted without generating entanglement during the process~\cite{Hovhannisyan13}. Furthermore, all operations that do not generate this useful work have recently been investigated~\cite{Francica25r}.
Regarding battery charge, when the battery is formed by $N$ cells (or copies) the duration time tends to zero as $N$ increases, and thus a quantum advantage is achieved~\cite{Binder15,Campaioli17,Gyhm22}. There are several proposals for the realization of quantum batteries, e.g., by using
many-body interactions in spin systems~\cite{Julia-Farre20}, with a cavity assisted charging~\cite{Ferraro18,Andolina19,Andolina18,Farina19,Zhang19,Andolina192,Crescente20}, in disordered chains~\cite{Rossini19} and fermionic Sachdev-Ye-Kitaev interactions~\cite{Rossini20,Rosa20,Carrega21,Gyhm24}, to name a few. In particular, rescaled Sachdev-Ye-Kitaev interactions~\cite{Francica25,Francica25E}, where the (correlated) disorder is chosen in order to asymptotically saturate the Bhatia-Davis bound~\cite{Bhatia00}, allow to achieve an optimal and extensive quantum advantage.

On the other hand, the work done exhibits statistics that satisfies fluctuation theorems~\cite{Jarzynski97,Crooks99} when it is taken into account with a two-point measurement scheme~\cite{Kurchan00,Tasaki00,Talkner07,Campisi11}, for an incoherent initial state. If there is initial quantum coherence with respect to the energy basis, a no-go theorem~\cite{Perarnau-Llobet17} indicates that the statistics can be represented with some quasiprobability distribution~\cite{Baumer18,Gherardini24,Arvidsson-Shukur24}, i.e., a normalized distribution that also takes negative~\cite{Allahverdyan14,Solinas15} or complex~\cite{Lostaglio24} values. This can be seen as a consequence of the presence of quantum contextuality~\cite{Lostaglio18}. Concerning the form of the quasiprobability, it can be derived by starting from some axioms, e.g., the first law of thermodynamics. For weak conditions, no unique quasiprobability distribution is selected, but we get a class of quasiprobability distributions~\cite{Francica22a,Francica22b}. A description in terms of quasiprobabilities allows one to perform an optimization with a utility function~\cite{Francica24a,Francica24b}, i.e., based on higher moments of the work, and fluctuation theorems can also be  formulated~\cite{Francica24ft}.
However, it is not clear what the role of quasiprobability of work in the charging quantum advantage is, where the duration time plays a key role. Here, we try to answer this question.

To do this, we start by introducing the charging process in Sec.~\ref{sec.process}, which is usually known as the direct charging protocol~\cite{Campaioli24}, and a class of work quasiprobability distributions, identified by a parameter $q$, in Sec.~\ref{sec.quasi}. For our purposes, we focus only on this charging protocol without going beyond the sudden quench setting or generic noise charging models, and we do not take into account alternative definitions of work.
In particular, we give an intuitive simple calculation in Sec.~\ref{subsec.adva-quasip}.  This suggests that there can be some relation between the negativity of some work distribution and the charging quantum advantage. This relation subsists for the work done in a certain time-subinterval of the process and not for any time-subinterval. Furthermore, although there is a quantum advantage, the $q=0,1$ quasiprobability distribution takes positive values, thus in this relation the work distribution is not negative in a representation-independent sense. Hence, we explain how the negativity of a certain work quasiprobability distribution in a certain time interval gives a sufficient condition to achieve quantum advantage in Sec.~\ref{sec.mainres}. In detail, this negativity survives in the limit of a large number of cells, suggesting that the process is strongly quantum. We  further investigate  this result with the help of a hybrid model in Sec.~\ref{sec.ex}. In the end, we summarize and discuss possible applications of our result in Sec.~\ref{sec.conclu}.

\section{Charging process}\label{sec.process}
In order to introduce the charging process, we focus on a quantum battery having $N$ cells and the free Hamiltonian $H_0= \sum_{i=1}^N h_i$, where $h_i$ is the local Hamiltonian of the i-th cell. In our process, the battery is initially prepared in the ground-state of $H_0$, which is $\ket{E_0}=\ket{0}^{\otimes N}$ with $\ket{0}$ such that $h_i\ket{0}=0$ for all $i=1,\ldots,N$. Thus, the lowest energy of the Hamiltonian $H_0$ is $E^0_{min}=E_0=0$. For simplicity, we assume that the ground state is not degenerate. The charging process leads the system to the final state $\ket{\psi(\tau)}=U_{\tau,0}\ket{\psi(0)}$ where $U_{t,0}$ is the unitary operator of time-evolution generated by the time-dependent Hamiltonian $H(t)$ with $t\in[0,\tau]$ such that $H(0)=H(\tau)=H_0$.
For simplicity, let us focus on a time-dependent Hamiltonian such that $H(t)=H_0$ for $t=0,\tau$ and $H(t)=H_1$ for $t\in (0,\tau)$. This situation can be realized by performing two sudden quenches at the initial ($t=0$) and final ($t=\tau$) times. In this case, the unitary operator of time-evolution is $U_{t,0}=e^{-i H_1 t}$ when $t\in(0,\tau)$. We consider a Hamiltonian $H_1$ such that $[H_1,H_0]\neq 0$ in order to achieve a  non-trivial dynamics with a final state $\ket{\psi(\tau)}\propto \ket{E_1}$, where $\ket{E_1}$ is the highest excited state of $H_0$, with energy $E_1=E^0_{max}\sim N$, since we have $\ket{E_1}=\ket{1}^{\otimes N}$ where $\ket{1}$ is the eigenstate of $h_i$ with maximum energy. For simplicity, we assume that the highest excited state is not degenerate.
Thus, at the end of the process, we get a fully charged battery. In order to optimize this charging process, we want to minimize the duration time $\tau$ of the process. To do this, the choice of the charging Hamiltonian $H_1$ is typically constrained by some conditions, e.g., the gap $\Delta E_1 = E_{max}-E_{min}$ between the maximum and minimum eigenvalues of $H_1$, which are $E_{max}$ and $E_{min}$, is not larger than the gap $\Delta E_0 = E^0_{max}$ between the maximum and minimum eigenvalues of $H_0$, which are $E^0_{max}\sim N$ and $E^0_{min}=0$. If no interactions between cells are allowed (parallel charging), the charging Hamiltonian has the form $H_1=\sum_i v_i$, where $v_i$ is a local operator of the i-th cell, and we get a minimum duration time $\tau$ that remains constant and nonzero as $N\to \infty$. On the other hand, if interactions are allowed, e.g., terms like $\sum_{i_1,i_2, \ldots, i_m} \lambda_{i_1i_2 \ldots i_m}v_{i_1}\otimes v_{i_2}\otimes\cdots \otimes v_{i_m}$ may be present in the Hamiltonian $H_1$, we can outperform the parallel charging case, getting a duration time $\tau\to 0$ as $N\to \infty$, and thus we achieve a charging quantum advantage.

In order to give a sufficient scaling argument for a quantum advantage such that $\tau\to 0$ as $N\to \infty$, we note that the average work done in the process in general is $W=\langle H_0 \rangle_\tau-\langle H_0 \rangle_0$.
In general, we define $\langle X \rangle_t =  \Tr{X \rho(t)}$ for any operator $X$ and a density matrix $\rho(t)$. In our case, $t\in[0,\tau]$ and $\rho(t)$ is the time-evolved pure state $\rho(t)=\ket{\psi(t)}\bra{\psi(t)}$ with $\ket{\psi(t)}=U_{t,0}\ket{\psi(0)}$, so that  $\langle X \rangle_t =  \bra{\psi(t)} X \ket{\psi(t)}$.
In particular, we have $\langle H_0 \rangle_t = \sum_i \langle e^{iH_1t}h_ie^{-iH_1t}\rangle_0$ which can be expanded in a  series of powers in $t$ by using the Baker-Campbell-Hausdorff formula, i.e., $\langle H_0 \rangle_t = \sum_i \langle H_1h_iH_1 \rangle_0 t^2 + \cdots$, where we noted that $\ket{\psi(0)}=\ket{E_0}$ and $E_0=E^0_{min}=0$. Since $\ket{\psi(\tau)}\propto \ket{E_1}$ for our process, we get
\begin{equation}\label{eq.cond1}
\frac{E_1}{N}=\frac{1}{N}\sum_i \langle H_1h_iH_1\rangle_0{\tau}^2+\cdots = \sum_k\sum_i \frac{w_{ik}}{N} {\tau}^k \,,
\end{equation}
where we note that the series is convergent. We emphasize that before to perform the limit $N\to\infty$, we have to sum the infinite series, since the two limits are not interchangeable in general. To see how $\tau$ scales with $N$, we can use Eq.~\eqref{eq.cond1}, which of course is satisfied for some duration time $\tau$, i.e., $\tau$ is a solution of Eq.~\eqref{eq.cond1}. Thus, if for some $k$ the coefficient of the k-th term is unbounded, i.e.,
\begin{equation}
\left|\sum_i \frac{w_{ik}}{N}\right|\to \infty
\end{equation}
as $N\to\infty$, then $\tau\to 0$ since the series in Eq.~\eqref{eq.cond1} converges to the sum $E_1/N\sim {\it O}(1)$, so that the series is bounded as $N\to\infty$.
To understand this, we define the function $w(N)$ as the smallest function such that $|\sum_i w_{ik}/N|^{\frac{1}{k}}\leq w(N)$ for all $k$. We define the function $f(x)=\sum_k a_k x^k$ with $a_k = \sum_i w_{ik}/(N (w(N))^{k})$, such that $a_k$ tends to a finite constant (or zero) in the limit $N\to\infty$ for all $k$. Then, $\tau=x^*/w(N)$, where $x^*\sim {\it O}(1)$ is the solution $x=x^*$ of $f(x)=E_1/N$, so that $\tau\to 0$ if $w(N)\to\infty$.
By noting that $|\sum_i w_{i2}|=\langle H_1H_0H_1\rangle_0$, we get that if $\langle H_1H_0H_1\rangle_0/N\to\infty$ as $N\to\infty$ then $\tau\to 0$.
%We get the sufficient condition $|\sum_i w_{i2}|/N=\langle H_1H_0H_1\rangle_0/N\to\infty$ that implies $\tau\to 0$.
Thus, in this case, there is a quantum advantage.
%\end{proof}

\section{Quasiprobability distribution}\label{sec.quasi}
Having introduced our charging process, let us focus on the work done in a time interval $[t_1,t_2]$ for a generic out-of-equilibrium process generated by a time-dependent Hamiltonian $H(t)$. We are considering a time interval $[t_1,t_2]$ different from the one previously considered, which was $[0,\tau]$, since we will focus on $t_1\in[0,t_2)$ and $t_2\in(0,\tau]$ in our charging process, later in the paper. In general, the statistics of the work can be represented by some quasiprobability distribution, and for our purposes, we focus on
\begin{eqnarray}
\nonumber && p_q(w) = \sum_{i,j,k} \text{Re}\bra{E_i^{t_1}}\rho(t_1)\ket{E_j^{t_1}}\bra{E_j^{t_1}}U^\dagger_{t_2,t_1}\ket{E_k^{t_2}}\\
&&\times\bra{E_k^{t_2}}U_{t_2,t_1}\ket{E_i^{t_1}}\delta(w-E_k^{t_2}+qE_i^{t_1}+(1-q)E_j^{t_1})\,,
\end{eqnarray}
where $q$ is a real number, $\rho(t_1)$ is the density matrix at time $t=t_1$, and instantaneous eigenvalues $E_i^t$ and eigenstates $\ket{E_i^t}$ are defined by the spectral decomposition $H(t)=\sum_i E_i^t \ket{E_i^t}\bra{E_i^t}$. We recall that the time-evolution operator is obtained as $U_{t_2,t_1}=\mathcal T e^{-i\int_{t_1}^{t_2}H(t)dt}$, where $\mathcal T$ is the time ordering operator. Thus, the moments of the work are defined as $\langle w^n\rangle = \int w^n p_q(w) dw$. In particular, the first three moments read $\langle w \rangle = \langle (H^{(H)}(t_2)-H(t_1))\rangle_{t_1}$, $\langle w^2 \rangle = \langle (H^{(H)}(t_2)-H(t_1))^2\rangle_{t_1}$ and
\begin{eqnarray}
\nonumber\langle w^3 \rangle &=& \langle (H^{(H)}(t_2)-H(t_1))^3\rangle_{t_1} \\
\nonumber&& +\frac{1}{2}\langle[H(t_1)+H^{(H)}(t_2),[H(t_1),H^{(H)}(t_2)]]\rangle_{t_1}\\
&&-3q(1-q)\langle[H(t_1),[H(t_1),H^{(H)}(t_2)]]\rangle_{t_1}\,,
\end{eqnarray}
where the average is calculated with respect to $\rho(t_1)$ as $\langle X \rangle_{t} = \Tr{X\rho(t)}$ for any operator $X$ and we define $H^{(H)}(t_2)=U^\dagger_{t_2,t_1}H(t_2)U_{t_2,t_1}$. We note that only the first two moments do not depend on the parameter $q$, whereas the higher moments, which will play a key role in our analysis, depend on the particular representation chosen, i.e., on the value of $q$. This quasiprobability distribution $p_q(w)$ is the only quasiprobability distribution that satisfies some conditions~\cite{Francica22b} (it reproduces the two-point measurement scheme for incoherent initial states and gives the values for $\langle w\rangle$ and $\langle w^2\rangle$ of above) and helps us to understand what the work statistics in the quantum regime is. When the quasiprobability distribution $p_q(w)$ also takes negative values, we can get some advantage. As noted in Ref.~\cite{Francica24b}, when there is negativity, the moments are not constrained by the Jensen's inequality. Then, given a strictly increasing function $f(w)$ that is concave, i.e., such that $f'(w)>0$ and $f''(w)\leq 0$, if $p_q(w)\geq 0$ for all $w$, from the Jensen's inequality we get the lower bound
\begin{equation}
\langle w \rangle \geq f^{-1}(\langle f(w)\rangle)\,.
\end{equation}
If $f(w)=a w + b$, with $a>0$, this bound is trivially saturated, but for a function $f(w)$ such that $\min_w -f''(w)/f'(w)\geq \varepsilon$ for a given $\varepsilon>0$, the bound cannot be saturated and if the bound is violated then there is negativity. We can refer to $f(w)$ as utility function, as usual. Thus, when $\langle w \rangle$ is negative for some work extraction processes, we can get a larger amount of work extracted $w_{ex}=-\langle w \rangle$ that goes beyond the bound given by the utility $f(w)$, provided that these processes show quasiprobability distributions of work  also taking negative values.
However, differently from the case discussed, it is not clear how the negativity of the quasiprobability can be related to the charging quantum advantage, which involves the duration time $\tau$ and the work moments seem to play no role.

In the end, we note that negativity can be measured with $\mathcal N = \int |p_q(w)|dw$, which is larger than 1, i.e., $\mathcal N>1$,  if and only if $p_q(w)$ takes also negative values~\cite{Francica23}.
We define the characteristic function $\chi_q(u)=\langle e^{iwu}\rangle$, such that the quasiprobability distribution is achieved by performing the Fourier transform $p_q(w)=\frac{1}{2\pi}\int e^{-iwu}\chi_q(u)du$. We get $\chi_q(u)=\frac{1}{2}(X_q(u)+X_{1-q}(u))$, where
\begin{eqnarray}
\nonumber
X_q(u)&=&\text{Tr}\big\{e^{-iuq H(t_1)}U_{t_1,0}\rho(0)U_{t_1,0}^\dagger e^{-iu(1-q)H(t_1)} \\
&& \times e^{iuH^{(H)}(t_2)}\big\}\,,
\end{eqnarray}
where we consider $\rho(t_1)=U_{t_1,0}\rho(0)U_{t_1,0}^\dagger$ and $t_1\in[0,t_2)$, $t_2\in(0,\tau]$.
Thus, negativity can be inferred experimentally by measuring the characteristic function $\chi_q(u)$  through an interferometric measurement scheme (see, e.g., Ref.~\cite{Francica22a}). For instance, we can consider a detector that is a qubit in the initial state $\rho_D(0)$ with Hamiltonian $H_D=\omega_D \ket{e}\bra{e}$. We consider $H_I=-\delta_e \ket{e}\bra{e}+\delta_g \ket{g}\bra{g}$ and $H'_I=-\delta'_e \ket{e}\bra{e}+\delta'_g \ket{g}\bra{g}$, where $\ket{g}$ is the ground-state of the qubit and $\ket{e}$ is the excited state. Thus, if the total Hamiltonian is $H_{tot}(t)= H(t)+H_D + \delta(t-t_1) H(t_1)\otimes H_I+\delta(t-t_2)H(t_2)\otimes H'_I$, the coherence of the qubit at the final time $\tau$ reads
\begin{eqnarray}\nonumber
\bra{e}\rho_D(\tau)\ket{g}&=&\bra{e}\rho_D(0)\ket{g}e^{-i\omega_D \tau}\text{Tr}\big\{e^{i\delta_e H(t_1)}U_{t_1,0}\rho(0)\\
&& \times U_{t_1,0}^\dagger e^{i\delta_g H(t_1)} e^{i(\delta'_e+\delta'_g)H^{(H)}(t_2)}\big\}\,,
\end{eqnarray}
from which we can determine $X_q(u)$.

\subsection{Quantum advantage and work quasiprobability}\label{subsec.adva-quasip}
To explain the intuition behind our main result, we focus on the charging process of Sec.~\ref{sec.process} with $H_1=\lambda (\ket{E_1}\bra{E_0}+\ket{E_0}\bra{E_1})$. In this case, the time-evolved state is $\ket{\psi(t)}=\cos(\omega t/2)\ket{E_0}-i\sin(\omega t/2)\ket{E_1}$, where $\omega=2\lambda$. The duration time is $\tau=\pi/\omega$, so we get the final state $\ket{\psi(\tau)}=-i\ket{E_1}$ and the battery is fully charged. If $\lambda\to \infty$ as $N\to\infty$, then $\tau\to 0$ and there is a quantum advantage. Let us examine the statistics of the work done in a time interval $[t_1,t_2]$. For $t_1=0$ there is no coherence with respect to the energy basis at the initial time $t_1$, then the work distribution is a probability distribution. For $t_1\in (0,t_2)$ and $t_2<\tau$ we get $H(t_1)=H(t_2)=H_1$ and since the time-evolution is generated by $H_1$, we get a trivial situation with a probability distribution for the work. In contrast, the case $t_1\in (0,\tau)$ and $t_2=\tau$ appears to be more interesting. In particular for $t_1=\tau/2$ and $t_2=\tau$ it is easy to see that the work quasiprobability distribution $p_q(w)$ reads
\begin{equation}
p_q(w)=\frac{1}{4}\sum_{i=\pm,k=0,1}\delta(w-E_k+E_i)+\delta p_q(w)\,,
\end{equation}
where $E_\pm$  are the only eigenvalues of $H_1$ different from zero and $\ket{E_\pm}$ are the corresponding eigenstates.
In detail, we define
\begin{eqnarray}
\nonumber\delta p_q(w)&=&\frac{1}{4}\sum_{k=0,1}(-1)^{k+1}(\delta(w-E_k+q E_++(1-q)E_-)\\
&&+\delta(w-E_k+q E_-+(1-q)E_+))\,,
\end{eqnarray}
which, of course, also takes negative values.
Then, the quasiprobability distribution $p_q(w)$ is a probability distribution for $q=0,1$, explicitly $p_0(w)=\frac{1}{2}\delta(w-E_1+E_-)+\frac{1}{2}\delta(w-E_1+E_+)$. In contrast, for $q\neq 0,1$, $p_q(w)$ also takes negative values due to the term $\delta p_q(w)$. Thus, the charging quantum advantage appears to be related to the negativity of $p_q(w)$. However, negativity does not depend on the support of $p_q(w)$, while $\tau\sim 1/\lambda$ depends on the scaling of $\lambda$ with $N$. Then, there is no relation between the magnitude of negativity and the degree of quantum advantage.
In particular, the negativity achieved is not related to a violation of a Leggett-Garg inequality, which is always satisfied (see Appendix~\ref{sec.leggett-garg}).

\section{Main result}\label{sec.mainres}

In the following, we consider the charging process of Sec.~\ref{sec.process} with $t_1\in (0,\tau)$ and $t_2=\tau$. We get the characteristic function $\chi_q(u)=\frac{1}{2}(X_q(u)+X_{1-q}(u))$ with
\begin{equation}\label{eq.XXX}
X_q(u)=\langle e^{-iu(1-q)H_1} e^{iuH_0}e^{-iuqH_1} \rangle_\tau\,,
\end{equation}
which does not depend on the particular value of $t_1$.
Furthermore, the cumulant generating function is defined as $G_q(u)=\ln \chi_q(u)$ and the n-th cumulant is $\kappa_n=(-i)^n G_q^{(n)}(0)$.
For example, for $q=1/2$ we get $\chi_{1/2}(u)=X_{1/2}(u)$. In particular, since $X_{1/2}(-u)=X^*_{1/2}(u)$, the Fourier transform of $X_{1/2}(u)$ takes real values, and thus $p_{1/2}(w)$ is a real valued distribution.
In contrast, for $q=0$, we get $\chi_0(u)=X_0(u)=X_1(u)=e^{iuE^0_{max}}\langle e^{-iuH_1}\rangle_\tau$.
This means that the quasiprobability distribution for $q=0$ is equal to the probability distribution of an observable.
Then, the cumulants $\kappa_n$ of the work for $q=0$ that are higher than the first cumulant $\kappa_1$ are equal to the cumulants $\kappa'_n$ of the Hamiltonian $-H_1$ coming from the generating function $G'(u)=\ln \langle e^{-iuH_1}\rangle_\tau$, which are $\kappa'_n=(-i)^n G'^{(n)}(0)$. In particular, the first cumulants are $\kappa'_1=-\langle H_1\rangle_\tau$, $\kappa'_2 = \langle H_1^2\rangle_\tau-\langle H_1\rangle_\tau^2$ and $\kappa'_3=-\langle H_1^3\rangle_\tau+3\langle H_1^2\rangle_\tau\langle H_1\rangle_\tau-2 \langle H_1\rangle_\tau^3$.
Since the first two moments do not depend on $q$, we see that the first two cumulants $\kappa_n$ are the same for any value of $q$. Thus, from the expression derived for $\chi_0(u)$, we get $\kappa_1=E^0_{max}-\langle H_1\rangle_\tau$ and $\kappa_2=\kappa_2'$. In particular, the first cumulant $\kappa_1$ is the average work done in the interval $[t_1,t_2]$. In contrast, the third cumulant reads
\begin{equation}\label{eq.thirdcumulant}
\kappa_3= \kappa'_3 -6 q(1-q)\langle H_1 \tilde H_0 H_1 \rangle_\tau\,,
\end{equation}
where we defined the operation of inversion of the spectrum of a hermitian operator $X$ as $\tilde X = X_{max}-X$, where $X_{max}$ is the maximum eigenvalue of $X$, so that $\tilde H_0 = E^0_{max}-H_0$. If the eigenvalues of $\tilde H_0$ are equal to the eigenvalues of $H_0$, this operation can be realized through some unitary operator $U_I$ that squares to the identity, i.e., $\tilde H_0=U_I H_0 U_I$ with $U_I^2=I$.

In general, we define $g_q(u)$ such that
\begin{equation}
X_q(u)=e^{iuE^0_{max}}e^{N g_q(u)}\,.
\end{equation}
The variance of the work distribution depends on $g_q''(0)$, and is nonzero if and only if $g''_q(0)\neq 0$.
We note that $g'_q(0)$ is imaginary, $g''_q(0)$ can be complex and $g''_{1/2}(0)$ is real,  $g''_{1/2}(0)=-\kappa_2/N$.
In particular, for $q=1/2$ the cumulant generating function is $G_{1/2}(u)=iuE^0_{max}+N g_{1/2}(u)$ and $\kappa_n=(-i)^n N g^{(n)}_{1/2}(0)$ for $n\geq2$.
We define the quasiprobability distribution
\begin{equation}
\hat p_q(w)=\sqrt{N} p_q(\sqrt{N}w)\,.
\end{equation}
We aim to analyze the properties of the asymptotic formula of this distribution as $N\to\infty$.
Its characteristic function is $\hat \chi_q(u)=\int e^{iuw}\hat p_q(w)dw = \chi_q(u/\sqrt{N})$.
Thus $\hat p_q(w)$ can be represented with a binned histogram, overcoming the delta peaks contained in the distribution. Since the length of the support of $\hat p_q(w)$ scales as $\sqrt{N}$, we can consider a number $M\sim \sqrt{N}$ of bins in the histogram. In particular, the histogram $\hat p^h_q(w)$ can be obtained from the characteristic function as explained in Appendix~\ref{sec.histogram}.
%Thus, we obtain the following lemma concerning the negativity of the quasiprobability distribution {\color{red}$\hat p_{q}(w)$}.
%\begin{lemma}\label{lemma.suffcond3}
%if $\hat p_{1/2}(w)<0$ for some $w$ as $N\to\infty$, then $|g^{(n)}_{1/2}(0)|\to \infty$ as $N\to \infty$ for some $n$. Furthermore, for $q\neq 1/2$, if $g''_q(0)$ is real and $\hat p_{q}(w)<0$ for some $w$ as $N\to\infty$ then $|g^{(n)}_{q}(0)|\to \infty$ as $N\to \infty$ for some $n$.
%\end{lemma}
We proceed by giving a formal cumulant argument similar to the proof of the central limit theorem. Basically, as we will see, under certain circumstances the main non-zero contribution for $q=1/2$ comes from a neighbourhood of the mean value, as $N\to\infty$.
%\begin{proof}
If $|g^{(n)}_{q}(0)|< \infty$ as $N\to \infty$ for all $n$, then we get
\begin{equation}\label{eq.Xq}
X_q\left(\frac{u}{N^{\frac{1}{2}}}\right)\sim e^{iuN^{-\frac{1}{2}}E^0_{max}+N^{\frac{1}{2}} g_q'(0)u+\frac{g_q''(0)}{2}u^2+{\it O}(N^{-\frac{1}{2}})}\,,
\end{equation}
from which we note that the skewness is ${\it O}(N^{-\frac{1}{2}})$ and tends to zero as $N\to\infty$.
By neglecting ${\it O}(N^{-\frac{1}{2}})$  terms, from Eq.~\eqref{eq.Xq} we get
\begin{equation}\label{eq.centra}
\hat p_q(w)\sim \frac{1}{\sqrt{2\pi}}\text{Re}\left(\frac{\exp\left(\frac{(\sqrt{N}w-E^0_{max}+iNg_q'(0))^2}{2Ng_q''(0)}\right)}{\sqrt{-g_q''(0)}}\right)\,,
\end{equation}
which for $q=1/2$ reduces to the Gaussian
\begin{equation}\label{eq.centraq12}
\hat p_{1/2}(w)\sim \frac{1}{\sqrt{2\pi}}\frac{\exp\left(\frac{(\sqrt{N}w-E^0_{max}+iNg_{1/2}'(0))^2}{2Ng_{1/2}''(0)}\right)}{\sqrt{-g_{1/2}''(0)}}\,.
\end{equation}
The same result is derived by noting that the cumulants corresponding to the quasiprobability distribution $\hat p_q(w)$ are $\hat \kappa_n = \kappa_n/N^{\frac{n}{2}}$. If $|g^{(n)}_{q}(0)|< \infty$ as $N\to \infty$ for all $n$, then $\hat \kappa_n \lesssim {\it O}(N^{-\frac{1}{2}})$ for $n\geq 3$. Then, $\hat \kappa_n \to 0$ as $N\to\infty$ for $n\geq 3$ and the function $g_q(u/\sqrt{N})$, appearing in $X_q(u/\sqrt{N})$, can be replaced by a quadratic polynomial in $u$ in the limit $N\to \infty$, and thus we achieve Eq.~\eqref{eq.centra}.
We note that the rate at which $\hat p_{1/2}(w)$ approaches the asymptotic Gaussian in Eq.~\eqref{eq.centraq12} is dictated by the largest error term. Then, the difference between  $\hat p_{1/2}(w)$ and the asymptotic Gaussian in Eq.~\eqref{eq.centraq12} tends to zero no slower than $1/\sqrt{N}$ as $N\to\infty$ for all $w$ in the support. This matches the well-known Berry-Esseen theorem. Then, for values of $w$ where the Gaussian tends to zero as $N\to\infty$, even if $\hat p_{1/2}(w)$ can take negative values, $\hat p_{1/2}(w)$ tends to zero no slower than $1/\sqrt{N}$ as $N\to\infty$.
Thus, $\hat p_q(w)$ is positive for $q=1/2$ as $N\to\infty$, i.e., $\hat p_{1/2}(w)\geq 0$ in the limit $N\to\infty$ for all $w$ in the support, and also takes negative values for $q\neq 1/2$ if $g_q''(0)$ is not real.
%Therefore, for $g_q''(0)\neq 0$, we deduce that for $q\neq 1/2$ if $p_q(w)\geq 0$ (for small deviations) as $N\to \infty$, then  $g_q(u)$ is unbounded or  $g_q''(0)$ is real. This result is in agreement with the expression obtained for $X_0(u)$, which gives $p_0(w)\geq 0$ and $g_0''(0)$ real.
Therefore, for $q= 1/2$ if $\hat p_{1/2}(w)<0 $ for some $w$  as $N\to \infty$, then  $g^{(n)}_{1/2}(0)$ is unbounded for some $n$, i.e.,  $|g^{(n)}_{1/2}(0)|\to \infty$ as $N\to \infty$ for some $n$.
Similarly, if $g''_q(0)$ is real, from Eq.~\eqref{eq.centra} we get $\hat p_q(w)\sim \hat p_{1/2}(w)$. Thus, in this case, if $\hat p_{q}(w)<0 $ for some $w$  as $N\to \infty$, then  $g^{(n)}_{q}(0)$ is unbounded for some $n$.
%\end{proof}
%Thus, we find that if $\hat p_{1/2}(w)<0$ for some $w$ as $N\to\infty$, then $|g^{(n)}_{1/2}(0)|\to \infty$ as $N\to \infty$ for some $n$. Furthermore, for $q\neq 1/2$, if $g''_q(0)$ is real and $\hat p_{q}(w)<0$ for some $w$ as $N\to\infty$ then $|g^{(n)}_{q}(0)|\to \infty$ as $N\to \infty$ for some $n$.

In order to proceed with the analysis, we define the class $\mathcal C$ of systems that are lattice models with local dimension $d$ and total number of sites $N$. The Hamiltonian reads $H_1=\sum_X v_X$, where $v_X=0$ if $|X|>r$ for some $r$. We assume that there are $k$ terms in the sum over $X$, with $k\leq N$. Moreover, we assume that $||H_1||\sim N$ and $||v_X||\sim \lambda$. In the end, we assume that the order of the interactions is $m\sim r^\nu$, where $\nu$ is the dimension of the lattice, i.e., we assume that for $|X|\leq r$
\begin{equation}
v_X = \lambda_{i_1i_2 \ldots i_m}v^1_{i_1}\otimes v^2_{i_2}\otimes\cdots \otimes v^m_{i_m}\,,
\end{equation}
with $m\sim r^\nu$, where $X=\{i_1,i_2,\ldots,i_m\}$ and $\lambda_{i_1i_2 \ldots i_m}\sim \lambda$, and %$H_1$ contains terms like $\sum_{i_1,i_2, \ldots, i_m} \lambda_{i_1i_2 \ldots i_m}v^1_{i_1}\otimes v^2_{i_2}\otimes\cdots \otimes v^m_{i_m}$, where
$[v^k_{i_k},H_0]\neq 0$ for %almost
all local operators $v^k_{i_k}$ on the site $i_k$. %explicitly appearing in the sum.
For simplicity, we also assume that $\langle v^k_{i_k}\rangle_0=\langle v^k_{i_k} \rangle_\tau=0$ for all $k$ and $i_k$. All these assumptions define the class $\mathcal C$.

We start to focus on systems with charging Hamiltonian $H_1\in \mathcal C$ such that (i) $\kappa'_3/N$ is bounded as $N\to\infty$. Then, from Eq.~\eqref{eq.thirdcumulant}, we get the following sufficient condition to get a quantum advantage.
\begin{lemma}\label{lemma.suffcond2}
For $H_1\in \mathcal C$, when $|\kappa'_3|/N<\infty$ as $N\to\infty$, if $|\kappa_3|/N\to\infty$ as $N\to\infty$ then $\tau\to 0$.
\end{lemma}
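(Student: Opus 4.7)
My plan is to reduce Lemma~\ref{lemma.suffcond2} to Lemma~\ref{lemma.suffcond}. The starting point is Eq.~\eqref{eq.thirdcumulant}, which rearranges to
\begin{equation}
\langle H_1\tilde H_0 H_1\rangle_\tau \;=\; \frac{\kappa'_3 - \kappa_3}{6q(1-q)}.
\end{equation}
Fix any $q$ with $q(1-q)\neq 0$. The hypothesis $|\kappa_3|/N\to\infty$, combined with the standing assumption that $\kappa'_3/N$ is bounded, then forces $|\langle H_1\tilde H_0 H_1\rangle_\tau|/N \to \infty$ as $N\to\infty$.

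Next, I would show that this divergence alone is enough to force $\tau\to 0$, by replaying the Baker--Campbell--Hausdorff expansion of Lemma~\ref{lemma.suffcond} but applied at the \emph{final} endpoint of the trajectory. Writing $\ket{\psi(\tau-s)} = e^{iH_1 s}\ket{\psi(\tau)}$ and using that $\ket{\psi(\tau)}\propto \ket{E_1}$ is an eigenstate of $H_0$ with eigenvalue $E^0_{max}$, a direct expansion yields
\begin{equation}
\langle H_0\rangle_{\tau-s} \;=\; E^0_{max} - s^2\,\langle H_1\tilde H_0 H_1\rangle_\tau + O(s^3).
\end{equation}
The linear term vanishes because $\bra{E_1}[H_1,H_0]\ket{E_1}=0$, and the coefficient of $s^2$ collapses to $\langle H_1\tilde H_0 H_1\rangle_\tau$ after using $H_0\ket{E_1}=E^0_{max}\ket{E_1}$ inside the double commutator. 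Setting $s=\tau$ and invoking $\langle H_0\rangle_0=0$ produces
\begin{equation}
\frac{E^0_{max}}{N} \;=\; \tau^2\,\frac{\langle H_1\tilde H_0 H_1\rangle_\tau}{N} + \sum_{k\geq 3}\frac{w'_{k}}{N}\,\tau^k,
\end{equation}
which is the exact analog of Eq.~\eqref{eq.cond1} with $\langle H_1 H_0 H_1\rangle_0$ replaced by $\langle H_1\tilde H_0 H_1\rangle_\tau$. The convergent-series argument of Lemma~\ref{lemma.suffcond} then finishes the job: the left-hand side is $O(1)$, so an unbounded coefficient of $\tau^2$ forces $\tau\to 0$.

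The main obstacle I foresee is not the algebra but a clean justification of this last ``bounded series'' step, namely ruling out a conspiracy among higher-order coefficients that would rescue a nonvanishing $\tau$ when the quadratic coefficient blows up. Since this is precisely the device already invoked in Lemma~\ref{lemma.suffcond}, nothing genuinely new is required. A slightly cleaner (if less general) shortcut exploits the spectrum-inversion operator with $\tilde H_0=U_I H_0 U_I$ and $U_I^2=I$: when $[U_I,H_1]=0$, one has $\langle H_1\tilde H_0 H_1\rangle_\tau = \langle H_1 H_0 H_1\rangle_0$ identically, in which case Lemma~\ref{lemma.suffcond} applies verbatim.
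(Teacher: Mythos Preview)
Your proposal is correct and follows the same overall line as the paper: use Eq.~\eqref{eq.thirdcumulant} together with the standing assumption (i) that $\kappa'_3/N$ is bounded to force $|\langle H_1\tilde H_0 H_1\rangle_\tau|/N\to\infty$, and then reduce to the series argument of Lemma~\ref{lemma.suffcond}. The paper's proof is terser at the bridging step: it simply asserts that $\langle H_1\tilde H_0 H_1\rangle_\tau$ ``scales as'' $\langle H_1 H_0 H_1\rangle_0$, noting the identity $\langle H_1\tilde H_0 H_1\rangle_\tau=\langle H_1 H_0 H_1\rangle_0$ under the inversion symmetry $U_I H_1 U_I=H_1$, and then invokes Lemma~\ref{lemma.suffcond} directly. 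Your variant---redoing the Baker--Campbell--Hausdorff expansion from the \emph{final} endpoint $\ket{\psi(\tau)}\propto\ket{E_1}$ to produce the mirror of Eq.~\eqref{eq.cond1} with $\langle H_1\tilde H_0 H_1\rangle_\tau$ as the quadratic coefficient---is a cleaner and more self-contained justification, since it does not rely on the inversion symmetry or on an informal scaling claim. The inversion-symmetry shortcut you mention at the end is exactly the paper's route. Your caveat about the ``bounded series'' step is apt and applies equally to the paper's own Lemma~\ref{lemma.suffcond}; nothing new is needed beyond what is already assumed there.
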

\begin{proof}
If $|\kappa_3|/N\to\infty$ as $N\to\infty$, then, from Eq.~\eqref{eq.thirdcumulant}, we get $\langle H_1 \tilde H_0 H_1 \rangle_\tau/N\to \infty$ since $\kappa'_3/N$ is bounded. In particular, this can happen only for $q\neq0,1$.
In the end, we note that the term $\langle H_1 \tilde H_0 H_1 \rangle_\tau$ in $\kappa_3$, appearing for $q\neq0,1$, scales as $\langle H_1 H_0 H_1 \rangle_0$.   To show this, we consider $H_1\in \mathcal C$. The charging Hamiltonian reads $H_1=\sum_X v_X$. By noting that $h_i \ket{\psi(0)}=0$, we get
\begin{equation}
\langle H_1 H_0 H_1\rangle_0 =  \sum_{i,X,X'|i\in X\cap X'} \langle v_X h_i v_{X'}\rangle_0 \,.
\end{equation}
Furthermore, since  $\langle v^k_{i_k}\rangle_0=0$ for all $k$ and $i_k$, we get
\begin{equation}\label{eq.proof---}
\langle H_1 H_0 H_1\rangle_0 =  \sum_{i,X|i\in X} \langle v_X h_i v_{X}\rangle_0 \,.
\end{equation}
Similarly, by noting that $\tilde h_i \ket{\psi(\tau)}=0$, we get
\begin{equation}
\langle H_1 \tilde H_0 H_1\rangle_\tau =  \sum_{i,X,X'|i\in X\cap X'} \langle v_X \tilde h_i v_{X'}\rangle_\tau \,,
\end{equation}
and, since  $\langle v^k_{i_k}\rangle_\tau=0$ for all $k$ and $i_k$, we get
\begin{equation}\label{eq.proof}
\langle H_1 \tilde H_0 H_1\rangle_\tau =  \sum_{i,X|i\in X} \langle v_X \tilde h_i v_{X}\rangle_\tau \,.
\end{equation}
Then, since both sums in Eqs.~\eqref{eq.proof---} and~\eqref{eq.proof} involve the same number of positive numbers, which scale as $\lambda^2$, we get $\langle H_1 \tilde H_0 H_1 \rangle_\tau\sim\langle H_1 H_0 H_1 \rangle_0$.
In particular, we get the equality $\langle H_1 \tilde H_0 H_1 \rangle_\tau=\langle H_1  H_0 H_1 \rangle_0$ if the inversion symmetry $U_I H_1 U_I = H_1$ is satisfied. Then, from the sufficient scaling argument of Sec.~\ref{sec.process}, %Lemma~\ref{lemma.suffcond},
 we get $\tau\to 0$.
\end{proof}
%Furthermore, we focus on systems with charging Hamiltonian $H_1$ such that (ii) for $q\neq 0,1$ if $|g^{(n)}_{q}(0)|\to\infty$ as $N\to \infty$ for some $n$, then $|g'''_{q}(0)|\to\infty$. We define the class $\mathcal C$ of systems that are lattice models with local dimension $d$ and total number of sites $N$. The Hamiltonian reads $H_1=\sum_X v_X$, where $v_X=0$ if $|X|>r$ for some $r$. We assume that there are $k$ terms in the sum over $X$, with $k\leq N$. Moreover, we assume that $||H_1||\sim N$ and $||v_X||\sim \lambda$. In the end, we assume that the order of the interactions is $m\sim r^\nu$, where $\nu$ is the dimension of the lattice, i.e., $H_1$ contains terms like $\sum_{i_1,i_2, \ldots, i_m} \lambda_{i_1i_2 \ldots i_m}v^1_{i_1}\otimes v^2_{i_2}\otimes\cdots \otimes v^m_{i_m}$, where $[v^k_{i_k},H_0]\neq 0$ for almost all local operators $v^k_{i_k}$ on the site $i_k$ explicitly appearing in the sum.
%Then condition (ii) is satisfied for any $H_1\in\mathcal C$ when condition (i) is satisfied.
Furthermore, when condition (i) is satisfied, for any Hamiltonian $H_1\in \mathcal C$, for $q\neq 0,1$, if $|g^{(n)}_{q}(0)|\to\infty$ as $N\to \infty$ for some $n$, then $|g'''_{q}(0)|\to\infty$.
\begin{proof}
We have to show that $|g'''_{q}(0)|<\infty$ for $N\to \infty$ implies $|g^{(n)}_{q}(0)|<\infty$ for $N\to \infty$ for all $n$. To show this, we consider  $H_1\in \mathcal C$. We get
\begin{equation}
N\sim ||H_1|| \leq \sum_X ||v_X||\sim \lambda k\,.
\end{equation}
Thus, $\lambda k \geq c N$ for some positive constant $c$.
From Eq.~\eqref{eq.proof}, if the order of the interactions is $m\sim r^\nu$, we get
\begin{equation}\label{eq.scal3}
\langle H_1 \tilde H_0 H_1\rangle_\tau = \sum_{i,X|i\in X} \langle v_X \tilde h_i v_{X}\rangle_\tau \sim \lambda^2 r^\nu k\,.
\end{equation}
Then, if condition (i) holds, since $\kappa_3=iNg'''_q(0)$, we get $|g'''_q(0)|\geq c' \lambda^2 r^\nu k /N$ for $q\neq 0,1$, and thus $\lambda^2 r^\nu k/N<\infty$. Since $\lambda k \geq c N$, we get $|g'''_q(0)|\geq c'' r^\nu N/k$, then $r^\nu N/k <\infty$. If $N/k\to \infty$, $|g'''_q(0)|\to\infty$, then $1\leq N/k<\infty$ and thus $r<\infty$.  Then $v_X$ is local and the state
\begin{equation}
\ket{\tilde \psi_q(u)}= e^{-iu\tilde H_0} e^{-iuqH_1}\ket{1}^{\otimes N}
\end{equation}
is short range entangled~\cite{Chen10}, thus can be expressed as a tensor product state with finite bond dimension. For $\nu=1$, we get the matrix product state
\begin{equation}
\ket{\tilde \psi_q(u)}=\sum_{i_1,\ldots,i_N} A_{i_1}\cdots A_{i_N} \ket{i_1\ldots i_N}\,,
\end{equation}
where $A_{i_k}$ are $D_{k-1}\times D_{k}$ matrices with $D_k<\infty$ as $N\to \infty$.
Similarly, the state
\begin{equation}
\ket{\psi_{1-q}(-u)}= e^{iu(1-q)H_1}\ket{1}^{\otimes N}
\end{equation}
can be expressed as
\begin{equation}
\ket{\psi_{1-q}(-u)}=\sum_{i_1,\ldots,i_N} B_{i_1}\cdots B_{i_N} \ket{i_1\ldots i_N}\,,
\end{equation}
where $B_{i_k}$ are $D_{k-1}\times D_{k}$ matrices with $D_k<\infty$ as $N\to \infty$.
The overlap between these two states reads
\begin{equation}\label{eq.mpsoverlap}
\braket{\psi_{1-q}(-u)}{\tilde\psi_q(u)} = \sum_{i_1,\ldots,i_N} B_{i_N}^\dagger\cdots B_{i_1}^\dagger A_{i_1}\cdots A_{i_N}\,.
\end{equation}
By performing a singular value decomposition, the matrix $A_{i_k}$ can be expressed as $A_{i_k}=U_{i_k} \Lambda_{i_k} V_{i_k}^\dagger$, where $U_{i_k}$ and $V_{i_k}$ are unitary matrices and $\Lambda_{i_k}=\sum_j a_{i_k,j}\ket{e_j}\bra{e_j}$ with $a_{i_k,j}\geq 0$, thus $A_{i_k}=\sum_j a_{i_k,j} \ket{a'_{i_k,j}}\bra{a_{i_k,j}} $ with $\ket{a'_{i_k,j}}=U_{i_k}\ket{e_j}$ and $\ket{a_{i_k,j}}=V_{i_k}\ket{e_j}$. In particular, for $u=0$ we get $A_{i_k}=\delta_{i_k 1}\ket{e_1}\bra{e_1}$. Similarly, $B_{i_k}=\sum_j b_{i_k,j} \ket{b'_{i_k,j}}\bra{b_{i_k,j}} $, then  from Eq.~\eqref{eq.mpsoverlap} we get
\begin{equation}
\braket{\psi_{1-q}(-u)}{\tilde\psi_q(u)}\sim \prod_{k=2}^N a_k b_k \braket{a'_k}{a_{k-1}}\braket{b_k}{b'_{k-1}}\,,
\end{equation}
where $a_k$ and $b_k$ are certain optimal elements in the sets $\{a_{i_k,j}\}$ and $\{b_{i_k,j}\}$, e.g., we can have (but not necessarily) $a_k = \max_{i_k,j} a_{i_k,j}$ and $b_k=\max_{i_k,j}b_{i_k,j}$, such that $a_k=b_k=1$ and $\ket{a_k}=\ket{a_k}=\ket{b_k}=\ket{b'_k}=\ket{e_1}$ for $u=0$. Then, by noting that
\begin{equation}
g_q(u) = \frac{1}{N} \ln \braket{\psi_{1-q}(-u)}{\tilde\psi_q(u)}
\end{equation}
we get
$g_q(u)\sim \frac{1}{N}\sum_k g_{q,k}(u)$
where $g_{q,k}(0)=0$, thus $g^{(n)}_q(0)\sim {\it O}(1)$ for all $n$. For $\nu>1$ the proof is completely similar.
\end{proof}
In contrast, when condition (i) is not satisfied, i.e., $\kappa'_3/N$ is unbounded, for $H_1\in\mathcal C$ we get $\tau\to 0$, and there is a quantum advantage. In general, for $H_1\in\mathcal C$ we have the following lemma.
\begin{lemma}\label{lemma.k3unbounded}
For $H_1\in \mathcal C$, if $|\kappa'_n|/N\to\infty$ as $N\to\infty$ for some $n\geq2$, then $\tau\to 0$.
\end{lemma}
\begin{proof}
To show this, we consider $H_1\in \mathcal C$. Given $X$ such that $|X|\leq r$, the number of $X'$ such that $|X'|\leq r$ and $X\cap X'\neq \emptyset$ is $r'<ar^\nu$ for some positive constant $a$, then
\begin{eqnarray}
\kappa'_2 &=& \sum_{X,X'|X\cap X'\neq \emptyset} \langle v_X v_{X'}\rangle_\tau-\langle v_X \rangle_\tau \langle v_{X'}\rangle_\tau\\
&\lesssim& k r' \lambda^2\,,\\
\nonumber \kappa'_3 &=& \sum_{X,X',X''|X\cap X'\cap X''\neq \emptyset} -\langle v_X v_{X'}v_{X''}\rangle_\tau\\
&&+3\langle v_X v_{X'}\rangle_\tau\langle v_{X''}\rangle_\tau-2 \langle v_X \rangle_\tau \langle v_{X'} \rangle_\tau\langle v_{X''} \rangle_\tau\\
|\kappa'_3|&\lesssim& k r'^2 \lambda^3\,,
\end{eqnarray}
and so on.
If $||H_1||\sim N$, then $\lambda \gtrsim N/k$, thus for $n\geq 2$, $|\kappa'_n|/N \lesssim (r'\lambda)^{n-1}k\lambda/N$ with $k\lambda/N\gtrsim 1$. Hence, if $|\kappa'_n|/N\to\infty$ as $N\to\infty$ for some $n\geq2$, then we can have (a) $k\lambda/N<\infty$ and $r'\lambda\to \infty$, (b) $k\lambda/N\to\infty$ and $r'\lambda>0$, or (c) $k\lambda/N\to\infty$ and $r'\lambda\to 0$. Similarly to Eq.~\eqref{eq.scal3}, if the order of the interactions is $m\sim r^\nu$, we have
\begin{equation}
\langle H_1 H_0 H_1\rangle_0 = \sum_{i,X|i\in X} \langle v_X h_i v_{X}\rangle_0 \sim \lambda^2 r^\nu k\,,
\end{equation}
thus $\langle H_1 H_0 H_1\rangle_0/N\sim r^\nu\lambda^2 k/N > c' r' \lambda^2 k/N$, so that if $|\kappa'_n|/N\to\infty$ for some $n\geq2$, then  $\langle H_1 H_0 H_1\rangle_0/N\to \infty$ (if (a) or (b) holds the proof is straightforward, if (c) holds it is enough to note that $r'\lambda\geq (r'\lambda)^{n-1}$) and thus from the sufficient scaling argument of Sec.~\ref{sec.process} %lemma~\ref{lemma.suffcond}
 we get $\tau \to 0$.
\end{proof}
Thus, we deduce our main result, which is the following theorem.
\begin{theorem}\label{theorem}
For the charging process of Sec.~\ref{sec.process}, for $t_1\in (0,\tau)$ and $t_2=\tau$, and for any charging Hamiltonian $H_1\in\mathcal C$, if for $q=1/2$ the asymptotic rescaled quasiprobability distribution of the work $w$ done in the time interval $[t_1,t_2]$ also takes negative values, i.e., $\hat p_{1/2}(w)<0$ for some $w$ as $N\to\infty$, then $\tau \to 0$ as $N\to \infty$, and thus there is a quantum advantage in the charging process.  Furthermore, for $q\neq 1/2$, if $g''_q(0)$ is real and $\hat p_{q}(w)<0$ for some $w$ as $N\to\infty$, then $\tau \to 0$ as $N\to \infty$.
\end{theorem}
\begin{proof}
For $H_1\in\mathcal C$, if condition (i) is not satisfied from lemma~\ref{lemma.k3unbounded} we get $\tau\to 0$. In contrast, if condition (i) is satisfied, then, %from lemma~\ref{lemma.suffcond3},
if $\hat p_{1/2}(w)<0$ for some $w$ as $N\to\infty$, $|g'''_{1/2}(0)|\to\infty$ as $N\to \infty$. Thus, $|\kappa_3|/N\to\infty$ and from lemma~\ref{lemma.suffcond2} we get $\tau\to 0$.
For $q\neq 1/2$, if $g''_q(0)$ is real, the proof is analogous.
\end{proof}
Basically, the theorem guarantees that, in some circumstances, if $\hat p_{1/2}$ shows negativity that does not vanish in the limit of a large number of cells $N\to\infty$, then we get a quantum advantage with $\tau\to 0$ in this limit.
We observe that the quasiprobability distribution for $q=1/2$ plays a key role since $g''_q(0)$ is surely real for $q=1/2$. For $q\neq 1/2$ negativity survives in the limit $N\to \infty$ if $g''_{q}(0)$ is complex although $\tau$ does not tend to zero as $N\to\infty$.
Of course, we note that the converse of theorem~\ref{theorem}  does not hold, since, in principle, we can achieve $\kappa_3/N$ unbounded for a nonnegative probability distribution $\hat p_{1/2}(w)$, thus negativity gives just a sufficient condition to achieve $\tau\to 0$.  We show this with the help of a specific example in the next section.

To explain the origin of this quantum advantage, we note that the presence of negativity in $\hat p_{1/2}(w)$ implies a non-local charging Hamiltonian $H_1$. Thus, the quantum advantage in this case is due to the non-local structure of $H_1$. To understand this, we consider a lattice model with local dimension $d$ and total number of sites $N$. If $H_1$ is a r-local Hamiltonian, then $H_1=\sum_X v_X$ with $v_X=0$ if $|X|>r$. If $r<\infty$ as $N\to\infty$, we get
\begin{equation}
|\langle H_1 \tilde H_0 H_1\rangle_\tau|\leq  \sum_i  \sum_{X|i\in X}\sum_{X'|i\in X'} |\langle v_X \tilde h_i v_{X'}\rangle_\tau| \leq c N
\end{equation}
for some constant $c$. Then, since $\kappa'_3/N$ is bounded, $\kappa_3/N$ is bounded. Similarly, $g^{(n)}_{1/2}(0)$ is bounded for all $n$. Then,  %for lemma~\ref{lemma.suffcond3},
if $\hat p_{1/2}(w)<0$ for some $w$ as $N\to\infty$, then $g^{(n)}_{1/2}(0)$ is unbounded for some $n$, which implies that $r\to\infty$ as $N\to\infty$, i.e., the charging Hamiltonian $H_1$ is non-local.
This charging Hamiltonian generates strong quantum coherence during time evolution; otherwise the quasiprobability distribution tends to be positive (in this case it reproduces the two-point measurement scheme probability distribution). Therefore, physical causes of this quantum advantage could be sought in these elements.

\section{Example} \label{sec.ex}

We focus on a local Hilbert space (of a single cell) having dimension $d=2$. Thus, we consider $h_i=\epsilon_0(\sigma^z_i+1)/2$, where $\sigma^x_i$, $\sigma^y_i$ and $\sigma^z_i$ are the local Pauli matrices of the i-th cell. We consider the charging Hamiltonian of Ref.~\cite{Julia-Farre20}
\begin{equation}\label{eq.H1no}
H_1 = \lambda \sum_{j=0}^{k-1}\otimes_{i=1}^r\sigma^x_{rj+i}+\lambda \otimes_{i=1}^s\sigma^x_{rk+i}\,,
\end{equation}
where the second term is absent for $s=0$, with $N=kr+s$, where $k=[N/r]$, $0\leq s < r$, and $[x]$ is the integer part of $x$. For $s=0$ the time-evolved state is
\begin{equation}
\ket{\psi(t)}=e^{-iH_1 t} \ket{0}^{\otimes N}=\otimes_{j=0}^{k-1}\ket{\psi_j(t)}\,,
\end{equation}
where
\begin{equation}
\ket{\psi_j(t)}=\sqrt{1-p(t)}\otimes_{i=1}^r\ket{0}_{rj+i}-i\sqrt{p(t)}\otimes_{i=1}^r\ket{1}_{rj+i}\,,
\end{equation}
with $p(t)=\sin^2(\lambda t)$. The local states $\ket{0}_i$ and $\ket{1}_i$ are the eigenstates of $\sigma^z_i$ with eigenvalues $-1$ and $1$, respectively. For $s>0$ the time-evolved state is
\begin{equation}
\ket{\psi(t)}=\otimes_{j=0}^{k}\ket{\psi_j(t)}\,,
\end{equation}
where
\begin{equation}
\ket{\psi_k(t)}=\sqrt{1-p(t)}\otimes_{i=1}^s\ket{0}_{rk+i}-i\sqrt{p(t)}\otimes_{i=1}^s\ket{1}_{rk+i}\,.
\end{equation}
Thus, the final time is
\begin{equation}\label{eq.finaltime}
\tau=\frac{\pi}{2\lambda}\,.
\end{equation}
For $s=0$, the maximum and minimum eigenvalues of $H_1$ are $E_{max}=\lambda k$ and $E_{min}=-\lambda k$, then their difference is $\Delta E_1=2\lambda k$. In contrast, for $s>0$ we get $\Delta E_1=2\lambda (k+1)$ due to the partial block. Requiring $\Delta E_1 \sim N$ as $N\to\infty$, we get $\lambda\sim r$, and thus $\tau\sim 1/r$. Then, if $r\to \infty$ as $N\to \infty$ we get $\tau\to 0$ and thus a quantum advantage. We note that $H_1\in \mathcal C$, then theorem~\ref{theorem} applies.
For this model, for $t_1\in (0,\tau)$ and $t_2=\tau$, from Eq.~\eqref{eq.XXX} we get in the limit $N\to \infty$
\begin{equation}
X_q(u)\sim e^{iuE^0_{max}}\left(c_q(u)-s_q(u)e^{-iur\epsilon_0}\right)^k\,,
\end{equation}
where $E^0_{max}=N\epsilon_0$, $c_q(u)=\cos(\lambda q u )\cos(\lambda(1-q)u)$ and $s_q(u)=\sin(\lambda q u )\sin(\lambda(1-q)u)$. Thus, $g_q(u)$ reads
\begin{equation}
g_q(u)\sim\frac{1}{r}\ln(c_q(u)-s_q(u)e^{-iur\epsilon_0})\,,
\end{equation}
from which we get $g'_q(0)=0$, $g''_q(0)=-\lambda^2/r\sim r$, $g'''_q(0)=6iq(1-q)\epsilon_0\lambda^2\sim r^2$. Furthermore, $U_I=\otimes^N_{i=1}\sigma^x_i$ and $U_I H_1 U_I=H_I$. Then condition (i) is satisfied. In order to calculate $\hat p_q(w)$, we consider $g_q(u/\sqrt{N})$ as $N\to\infty$. If $r/\sqrt{N}\to 0 $ as $N\to\infty$, then $g_q(u/\sqrt{N})\sim - \lambda^2 u^2/(2rN)$ so that
\begin{equation}
\hat p_{q}(w)\sim \frac{\sqrt{N}}{\sqrt{2\pi}\sigma}e^{-\frac{(\sqrt{N}w-E^0_{max})^2}{2\sigma^2}}\,,
\end{equation}
with $\sigma^2=N\lambda^2/r$. Then, $\hat p_{q}(w)$ is a Gaussian, $\hat p_{q}(w)>0$ although $\tau\to 0$ if $r\to\infty$ as $N\to \infty$.
In contrast, if $r/\sqrt{N}\to \infty $ as $N\to\infty$, $\hat p_{1/2}(w)$ also takes negative values as shown in Fig.~\ref{fig:plot1}.
%Thus, there is negativity if the non-locality of the interactions is strong enough.
\begin{figure}
[t!]
\centering
\includegraphics[width=0.9\columnwidth]{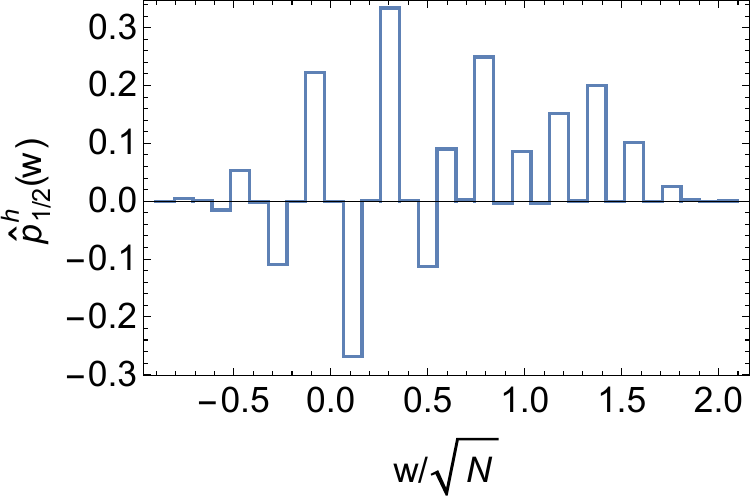}
\caption{The histogram $\hat p^h_{1/2}(w)$ of the quasiprobability distribution $\hat p_{1/2}(w)$. We put $N=1000$, $r=[N^{0.75}]$, $\lambda=r$ and $\epsilon_0=1$. We divide the support in a number $M=[\sqrt{N}]$ of bins, and the normalization of the plotted quasiprobability histogram is such that the sum of the heights of the bins is one.
}
\label{fig:plot1}
\end{figure}

Finally, it is interesting to investigate whether the negativity is robust to perturbations. To do this, we consider a perturbation $H_p=\alpha\sum_{i=1}^N \sigma^x_i$, so that the charging Hamiltonian reads $H'_1=H_1+H_p$. For $r\to\infty$ as $N\to\infty$, at the final time in Eq.~\eqref{eq.finaltime} the final state is $\ket{\psi(\tau)}\sim \ket{1}^{\otimes N}+{\it O}(1/r)$, thus the battery tends to be fully charged as $N\to\infty$. Concerning the quasiprobability distribution $p_{1/2}(w)$, we get
\begin{eqnarray}\nonumber
g_{1/2}(u)&\sim&\frac{1}{r}\ln(\alpha_{1/2}^r(u)c_{1/2}(u)-\beta_{1/2}(u)\\
&&-(\alpha_{1/2}^r(u))^*s_{1/2}(u)e^{-iur\epsilon_0})\,,
\end{eqnarray}
where
%\begin{eqnarray}
%\alpha_{1/2}(u)&=&\cos^2(\alpha u/2)-e^{-iu\epsilon_0}\sin^2(\alpha u/2)\,,\\
%\nonumber\beta_{1/2}(u)&=&-2i \cos(\lambda u/2)\sin(\lambda u/2)(-i\cos(\alpha u/2)\\
%&&\times\sin(\alpha u/2)(1+e^{-iu\epsilon_0}))^r\,.
%\end{eqnarray}
\begin{eqnarray}
\alpha_{1/2}(u)&=&\cos^2(\alpha u/2)-e^{-iu\epsilon_0}\sin^2(\alpha u/2)\,,\\
\beta_{1/2}(u)&=&i \sin(\lambda u)(-i\sin(\alpha u)(1+e^{-iu\epsilon_0})/2)^r\,.
\end{eqnarray}
As shown in Fig.~\ref{fig:plot2}, negativity persists for this commuting perturbation and for the displayed parameters. %negativity is robust to the perturbation.
\begin{figure}
[t!]
\centering
\includegraphics[width=0.9\columnwidth]{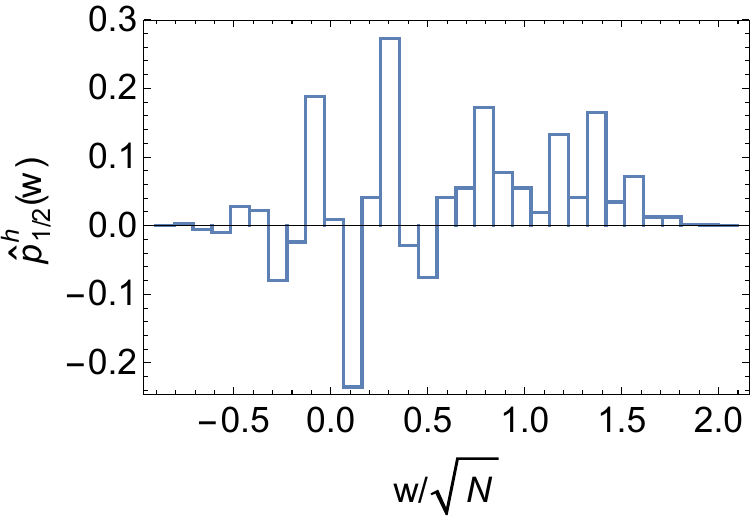}
\caption{The histogram $\hat p^h_{1/2}(w)$ of the quasiprobability distribution $\hat p_{1/2}(w)$. We consider the same values of Fig.~\ref{fig:plot1} and $\alpha=1$. %We put $N=1000$, $r=[N^{0.75}]$, $\lambda=r$, $\alpha=1$ and $\epsilon_0=1$. {\color{red}We divide the support in a number $M=[\sqrt{N}]$ of bins, and the normalization of the plotted quasiprobability histogram is such that the sum of the heights of the bins is one.}
}
\label{fig:plot2}
\end{figure}

\section{Conclusions}\label{sec.conclu}
We showed how the quantum advantage is connected to the quasiprobability of the work done in a subinterval of the charging process. If there is some negativity in a quasiprobability ($q=1/2$) of the work done, then there is a quantum advantage.  Thus, the $q=1/2$ quasiprobability plays a privileged role, while for $q\neq 1/2$ negativity can survive although the duration time does not tend to zero.
In particular, for $q=1/2$, if negativity survives in the limit of a large number of cells, the distribution shows super-extensive cumulants, which implies that a quantum advantage is achieved. Basically, under the assumptions used in the proof, the negativity signals non-local interactions in the charging Hamiltonian, giving the quantum advantage in the charging of the battery. For these models, this is related with the generation of quantum coherence and entanglement during time evolution. In contrast, from the point of view of the work statistics, the work always admits a representation ($q=0$) that is a probability distribution (of an observable).
We hope that this can help us to better understand the origin of the quantum advantage with a charging time that tends to zero in the limit of a large number of cells, where quantum effects should be very strong.
Furthermore,  it remains an open problem whether we can check whether the quantum advantage has been achieved by looking at the negativity of a work distribution. In particular, the latter should be measured without destroying the final state of the battery, and a nondemolition diagnostic of the relevant quasiprobability is not supplied here and remains outside the scope of the paper.

\appendix

\section{Leggett-Garg inequality}\label{sec.leggett-garg}
Let us focus on the work $w$ done in the time interval $[t_1,\tau]$ with $t_1\in (0,\tau)$. We can consider the works $w_\tau$ and $w_{t_1}$ done in the time intervals $[0,\tau]$ and $[0,t_1]$. Since $\rho(0)$ is incoherent with respect to the energy basis, these works have the probability distribution $p(w_t)=\sum_{i,k} p_i p_{k|i}\delta(w_t-E^t_k+E^0_i)$ with $t=\tau,t_1$, where $p_i=\bra{E^0_i}\rho(0)\ket{E^0_i}$ and $p_{k|i}=|\bra{E^t_k}U_{t,0}\ket{E^0_i}|^2$. Thus, if $p_q(w)<0$ for some $w$, then there is no joint probability distribution $p(w_\tau,w_{t_1})$, since $w=w_\tau-w_{t_1}$ and thus $p(w_\tau,w_{t_1})$ is such that $\int \delta(w-w_\tau+w_{t_1})p(w_\tau,w_{t_1})dw_\tau dw_{t_1}=p_q(w)$. Furthermore, if there exists a joint probability distribution $p(w_\tau,w_{t_1})$, we obtain the inequality for the covariance
\begin{equation}
|\langle w_\tau w_{t_1}\rangle - \langle w_\tau\rangle \langle w_{t_1}\rangle|\leq \sigma_\tau \sigma_{t_1}\,,
\end{equation}
where $\sigma^2_t=\langle w_t^2\rangle-\langle w_t\rangle^2$. This inequality can be expressed in terms of the variance $\sigma^2=\langle w^2\rangle-\langle w\rangle^2$ as
\begin{equation}\label{eq.LG}
|\sigma^2_\tau + \sigma^2_{t_1}- \sigma^2|\leq 2\sigma_\tau \sigma_{t_1}\,,
\end{equation}
which recalls a Leggett-Garg inequality~\cite{Leggett85,Miller18}. Actually, Leggett-Garg inequalities concern macrorealism
and the existence of suitable multi-time probability distributions, not merely the displayed covariance bound. In particular, Eq.~\eqref{eq.LG} does not depend on $q$ and thus on the quasiprobability representation chosen. Therefore, if the inequality in Eq.~\eqref{eq.LG} is violated, then there is no joint probability $p(w_\tau,w_{t_1})$ and we can get a distribution $p_q(w)$ that is negative for some $w$. It is easy to check that for the charging process under consideration, the inequality is not violated, since $\sigma^2_\tau=0$ and $\sigma^2_{t_1}= \sigma^2$. %$|\sigma^2_\tau + \sigma^2_{t_1}- \sigma^2|=E^2_1>0$.
However, this does not exclude the fact that $p_q(w)$ is negative for some $w$. Furthermore, we note that for $q=0$ the quasiprobability distribution $p_0(w)$ is non-negative, so there is a joint probability distribution $p(w_\tau,w_{t_1})$. This explains why the %Leggett-Garg
inequality is satisfied.

\section{Histogram}\label{sec.histogram}
To determine the histogram $p^h_q(w)$ of the quasiprobability distribution of work $p_q(w)$ from the characteristic function $\chi_q(u)$ we consider the intervals $I_n = [w_n-\Delta w/2,w_n + \Delta w/2]$, where $w_n=n \Delta w$ with $n$ integer. Then, we can determine the histogram by calculating the probability
\begin{equation}
p_n = \int_{I_n} p_q(w) dw = \frac{\Delta w}{2\pi}\int  \chi_q(u)\text{sinc}\left(\frac{u\Delta w}{2}\right) e^{-i u w_n} du
\end{equation}
where $\text{sinc}(x) = \sin(x)/x$. Thus,
\begin{equation}
p^h_q(w) = \sum_n p_n \chi_n(w)\,,
\end{equation}
where $\chi_n(w)=1$ for $w\in I_n$ and $\chi_n(w)=0$ for $w\notin I_n$. To calculate the integral we can focus on the interval $[-2\pi K /\Delta w,2\pi K /\Delta w]$ with $K$ large enough. Of course $p_q(w_n)\approx p_n/\Delta w$ for $\Delta w$ small enough.
In the end, we note that the histogram $\hat p^h_q(w)$ of the rescaled quasiprobability distribution $\hat p_q(w)$ is defined in the same way, and of course has the same form of $p^h_q(w)$ if the support is divided in the same number of bins.

\end{document}